\documentclass{endm}
\usepackage{endmmacro,amsmath,mathrsfs}
\usepackage{graphicx}


\newcommand{\argmax}{\operatornamewithlimits{argmax}}
\def \ie {i.e.~}
\def \eg {e.g.~}
\def \pm {p.m.~}
\def \NP {\mathcal{NP}}
\def \K {\mathscr{K}}
\def \PPP {\mathscr{P}}

\begin{document}

\begin{verbatim}\end{verbatim}\vspace{2.5cm}

\begin{frontmatter}

\title{Cross-identification of stellar catalogs with multiple stars: Complexity and Resolution}

\author{Daniel Sever\'in\thanksref{ALL}\thanksref{email}}
\address{Depto. de Matem\'atica (FCEIA), Universidad Nacional de Rosario, Argentina\\
   CONICET, Argentina}
	\thanks[ALL]{Partially supported by grants PICT-2016-0410 (ANPCyT) and PID ING538 (UNR).}
	 \thanks[email]{Email: \href{mailto:daniel@fceia.unr.edu.ar} {\texttt{\normalshape daniel@fceia.unr.edu.ar}}}

\begin{abstract}
In this work, I present an optimization problem which consists of assigning entries of a stellar catalog to multiple entries of
another stellar catalog such that the probability of such assignment is maximum.
I show a way of modeling it as a Maximum Weighted Stable Set Problem which is further used to solve a real astronomical instance
and I partially characterize the forbidden subgraphs of the resulting family of graphs given by that reduction.
Finally, I prove that the problem is $\NP$-Hard.
\end{abstract}

\begin{keyword}
Cross-identification, Complexity, Maximum Weighted Stable Set Problem, Forbidden subgraphs.
\end{keyword}

\end{frontmatter}

\section{Introduction}
\label{sec:intro}

In the science of astronomy, it is common to record the position and other physical quantities of stellar objects
in astronomical catalogs. They are of extreme importance for various disciplines, such as navigation, space research and geodesy.
Naturally, in star catalogs, a single star has different designations according to the catalog being used that uniquely identifies it.
Suppose that $A$ and $B$ are star catalogs, and $id_{A}$, $id_{B}$ are the designations of the same star in $A$ and $B$ respectively.
It is often necessary to know $id_{B}$ given $id_{A}$.
This kind of cross-identification can be performed by software tools available on Internet, such as
Xmatch\footnote{\url{http://matthiaslee.github.io/Xmatch}}
or the web-based CDS X-Match Service\footnote{\url{http://cdsxmatch.u-strasbg.fr/xmatch}}, which usually use heuristic algorithms.
It was not until recently, however, that exact approaches began to be proposed. For instance, in \cite{journal:budavari}, a
cross-identification problem is solved through assignment problems via the Hungarian Algorithm.

The correspondence between two catalogs does not need to be one-to-one. Some stars appearing as single ones in one catalog could
correspond to multiple stars in the other.
Although some catalogs, such as SAO and PPM, inform whether a certain star is double or not,
available cross-matching tools do not take into account this piece of information about the star.

Consider the following cross-identification problem.
Given two catalogs $A$ and $B$ covering the same region of the sky and being $B$ denser than $A$,
the problem consists of finding the ``most probable'' assignment such that every star $a$ is assigned up to $k_a$ stars of $B$,
where $k_a$ is the \emph{multiplicity} of $a$ informed by catalog $A$.

The original motivation to study this novel matching problem has arisen during a joint collaboration with astrophysicist
Diego Sevilla \cite{journal:mipaper} and whose objective has been the development of a new digital version of the \emph{Cordoba Durchmusterung},
a star catalog widely used in the twentieth century.

In this work, I describe an optimization problem which I call $\K$-\emph{Matching Problem} and I give a polynomial-time reduction to the
Maximum Weighted Stable Set Problem (MWSSP).
This reduction is further used for solving a real instance.
I also present an open question concerning the forbidden subgraphs of the family of graphs that arise in that reduction
and I identify two of the forbidden subgraphs.
Then, I prove that the $\K$-\emph{Matching Problem} is $\NP$-Hard for a given $\K \geq 2$.

%
%
\section{Problem description and resolution}
\label{sec:model}

Consider two star catalogs where each star is represented as elements of a set $A$ or $B$.
Let $n_A$ and $n_B$ be the cardinality of $A$ and $B$ respectively.

For a given entry $a \in A$, let $k_a$ be the multiplicity of $a$ in the first catalog.
That is, if $a$ represents a single star then $k_a = 1$, if $a$ represents a double one then $k_a = 2$, and so on.
Also, let $\K$ be the largest multiplicity. 

The resolution of our problem is divided in two phases:
\begin{itemize}
\item Phase 1: From the astrometric and photometric data available from catalogs, generate an instance of the $\K$-Matching Problem.
\item Phase 2: Reduce that instance to an instance of the MWSSP and solve it.
\end{itemize}
The first phase depends on the structure of both catalogs and involves criteria in the field of Astronomy, which can be separated from the
mathematical description of the problem. For that reason, it will be discussed in an Online
Appendix\footnote{\url{http://fceia.unr.edu.ar/~daniel/CD/new/onlineapp.pdf}}.
In this section, only the second phase is addressed.

During the first phase, \emph{candidates sets of stars} $P_a \subset \PPP(B)$ are generated for each $a \in A$.
For instance, the set $P_a = \{ \emptyset, \{b_1\}, \{b_2\}, \{b_1, b_3\}\}$ indicates that $a$ can be assigned to $b_1$, $b_2$, the
pair $\{b_1, b_3\}$ or no one (indicated by the presence of $\emptyset$) with positive probability.
Naturally, every $j \in P_a$ must satisfy $|j| \leq k_a$.
For a given star $a \in A$ and a set $j \in P_a$, denote the event that ``$a$ corresponds to $j$'' by $a \rightarrow j$ and
its probability by $p(a \rightarrow j)$, which is computed during the first phase. Also,
$\sum_{j \in P_a} p(a \rightarrow j) = 1$.

An \emph{assignment} $f : A \rightarrow \PPP(B)$ is \emph{valid} when it satisfies $f(a) \in P_a$ for all $a \in A$, and for any
$a_1, a_2 \in A$ such that $a_1 \neq a_2$, 
then $f(a_1) \cap f(a_2) = \emptyset$,
\ie candidates of $B$ assigned to $a_1$ and $a_2$ must not share common stars. Let $\mathcal{F}$ be the space of valid assignments.
Each $f \in \mathcal{F}$ has a corresponding probability $p(f) = p(a_1 \rightarrow f(a_1), a_2 \rightarrow f(a_2),\ldots)$.
We are interested in finding the most probable assignment:
$f^* \in \argmax_{f \in \mathcal{F}} p(f)$.
Since the number of assignments is exponential, it makes little sense to perform the computation of the real probability of each one.
Thus, let us make a simplification at this point by supposing the following assumption:
\begin{center}
\emph{for all $f \in \mathcal{F}$ and $a, a' \in A$ such that $a \neq a'$, events
$a \rightarrow f(a)$ and $a' \rightarrow f(a')$ are independent each other}.
\end{center}

Let $\overline{p}(f) = \prod_{a \in A} p(a \rightarrow f(a))$. If the previous assumption holds, we would have $\overline{p}(f) = p(f)$.
Although it usually does not hold, the assignment $f$ that maximizes $\overline{p}(f)$ is enough good for practical purposes.
Denote $w_{aj} = -ln(p(a \rightarrow j))$ for $a \in A$ and $j \in P_a$, and let $w(f) = \sum_{a \in A} w_{a f(a)}$.
It is easy to see that an optimal assignment $f$ can be found by minimizing $w(f)$, which is linear. The problem is defined as follows:

\newpage

\noindent \underline{$\K$-Matching Problem}\\
\noindent \emph{INSTANCE}:\\
\indent \indent $n_A, n_B \in \mathbb{Z}_+$;\\
\indent \indent $A, B$ such that $|A| = n_A$, $|B| = n_B$;\\
\indent \indent $P_a \subset \PPP(B)$ such that $|j| \leq \K$ for all $j \in P_a$, for all $a \in A$;\\
\indent \indent $w_{aj} \in \mathbb{R}_+$ for all $j \in P_a$ such that $\sum_{j \in P_a} e^{-w_{aj}} = 1$, for all $a \in A$.\\
\noindent \emph{OBJECTIVE}: Obtain a valid assigment $f$ such that $w(f)$ is minimum.

\medskip

Below, I show that this problem can be polynomially transformed to the MWSSP.
Recall that, given a graph $G = (V,E)$ and weights $z \in \mathbb{R}_+^V$, MWSPP consists of finding a stable set $S \subset V$ of $G$
such that $z(S) = \sum_{v \in S} z_v$ is maximum.
Let $G=(V,E)$ be the graph such that $V = \{v_{aj} : a \in A,~ j \in P_a\}$,
\begin{multline*}
E = \{ (v_{aj}, v_{aj'}) : a \in A,~~ j,j' \in P_a,~~ j \neq j'\}~ \cup \\
\{ (v_{aj}, v_{a'j'}) : a,a' \in A,~~ a \neq a',~~ j \in P_a,~~ j' \in P_{a'},~~ j \cap j' \neq \emptyset\},
\end{multline*}
and consider weights $z_{aj} = M - w_{aj}$ where $M = \sum_{a \in V} \sum_{j \in P_a} w_{aj}$.
\begin{theorem} \label{reducstableset}
Let $S$ be an optimal stable set of the MWSSP.
The $\K$-Matching Problem is feasible if and only if $z(S) > M.(n_A-1)$ and, in that case, $f(a) = j$ for all $v_{aj} \in S$ is an
optimal assignment of the $\K$-Matching Problem.
\end{theorem}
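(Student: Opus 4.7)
The plan is to show that stable sets of $G$ are in natural correspondence with \emph{partial} valid assignments, that the threshold $M(n_A-1)$ detects whether a stable set saturates $A$, and that maximizing $z$ over saturating stable sets is equivalent to minimizing $w(f)$.

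First I would read off the meaning of stability. The first family of edges forces any stable set $S$ to contain at most one vertex $v_{aj}$ per $a\in A$, and the second family forces the selected subsets of $B$ to be pairwise disjoint. Hence setting $f(a)=j$ whenever $v_{aj}\in S$ defines a valid assignment on the subset of $A$ covered by $S$, and conversely every (partial) valid assignment yields a stable set. In particular, $|S|=n_A$ iff $S$ corresponds to some $f\in\mathcal{F}$, so $\mathcal{F}\neq\emptyset$ iff $G$ has a stable set of size $n_A$.

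Next I would analyze $z(S)=|S|\,M-\sum_{v_{aj}\in S}w_{aj}$. Since all $w_{aj}\geq 0$, any stable set with $|S|\leq n_A-1$ satisfies $z(S)\leq(n_A-1)M$. Conversely, for a valid $f\in\mathcal{F}$ the induced stable set $S_f$ satisfies $z(S_f)=n_A M-\sum_{a\in A}w_{a,f(a)}$; because $M=\sum_{a\in A}\sum_{j\in P_a}w_{aj}$ strictly dominates $\sum_{a\in A}w_{a,f(a)}$ (the normalization $\sum_{j\in P_a}e^{-w_{aj}}=1$ forces $w_{aj}>0$ whenever $|P_a|\geq 2$), we get $z(S_f)>(n_A-1)M$. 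Combining the two bounds proves the equivalence ``$\mathcal{F}\neq\emptyset$ iff the optimal $z(S)$ exceeds $M(n_A-1)$''.

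Finally, under feasibility the first bound forces every optimal stable set $S^*$ to have $|S^*|=n_A$, hence to correspond to some $f^*\in\mathcal{F}$ with $z(S^*)=n_A M-w(f^*)$. Maximizing $z(S^*)$ over the MWSSP is therefore equivalent to minimizing $w(f^*)$ over $\mathcal{F}$, proving optimality of the induced assignment. The main technical point is establishing the strict inequality $\sum_{a}w_{a,f(a)}<M$, which relies on the normalization of the probabilities; the mildly degenerate case where $|P_a|=1$ for every $a$ forces $M=0$ and would need separate treatment (or an adjustment of $M$ to something like $1+\sum_{a,j}w_{aj}$).
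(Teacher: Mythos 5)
Your proposal is correct and follows essentially the same route as the paper's proof: stable sets correspond to (partial) valid assignments, $z(S)=|S|M-\sum_{v_{aj}\in S}w_{aj}$ gives the threshold $M(n_A-1)$ for saturation of $A$, and maximizing $z$ over saturating stable sets is minimizing $w(f)$. Your closing caveat about the fully degenerate case ($|P_a|=1$ for every $a$, forcing all $w_{aj}=0$ and $M=0$, so that a feasible instance has $z(S)=M(n_A-1)$ rather than $z(S)>M(n_A-1)$) is a genuine, if minor, boundary issue that the paper's own proof also passes over when it asserts $z(\hat S)>M(n_A-1)$ without comment.
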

\begin{proof}
If the $\K$-Matching Problem is feasible, there exists a valid assignment $\hat{f}$. Let $\hat{S} \subset V$ such that
$v_{aj} \in \hat{S}$ if and only if $\hat{f}(a) = j$. It is easy to see that $\hat{S}$ is a stable set of $G$ whose weight is greater
than $M.(n_A-1)$. Since $S$ is optimal, $z(S) \geq z(\hat{S}) > M.(n_A-1)$.

Conversely, assume that $z(S) > M.(n_A-1)$ and let $f(a) = j$ for all $v_{aj} \in S$.
First, let us prove that $f$ is a valid assignment. Suppose that there exists $a^* \in A$ such
that $v_{a^* j} \notin S$ for every $j$. Then, $z(S) \leq M.(n_A-1) - \sum_{v_{aj} \in S} w_{aj} \leq M.(n_A-1)$ which leads to a contradiction.
Then, $f$ is defined for all $a \in A$. In addition, if $v_{aj}, v_{aj'} \in S$ then $v_{aj} = v_{aj'}$ so $a$ is assigned to a unique
$j$. Furthermore, if $a,a' \in A$ and $b \in B$ such that $b \in j$ and $b \in j'$ for some $j \in P_a, j' \in P_{a'}$ then
$a = a'$ so $b$ is assigned to at most one star of $A$.
Now, let us prove that $f$ is optimal. Suppose that there exists a valid assignment $\hat{f}$ such that $w(\hat{f}) < w(f)$.
Again, let $\hat{S} \subset V$ such that $v_{aj} \in \hat{S}$ if and only if $\hat{f}(a) = j$.
It is easy to see that $\hat{S}$ is a stable set of $G$ whose weight is $M.n_A - w(\hat{f})$. Then, $z(\hat{S}) > M.n_A - w(f) = z(S)$,
which is absurd.
\end{proof}

Based on this reduction, an exact algorithm (which can be consulted in the Online Appendix) was implemented for solving instances of
the 2-Matching Problem. Then, a real catalog of 52313 stars (where 568 are doubles) was cross-identified against another of 83397 stars
in less than a minute of CPU time.
The algorithm, auxiliary files and the resulting catalog are available \cite{dataset:mendeley}.

Now, define $\mathcal{F}_{\K}$ as the family of graphs $G$ obtained by the previous reduction for any instance of the $\K$-Matching
Problem.
It is clearly that the 1-Matching Problem, \ie when no multiple stars are present in catalog $A$, can be trivially reduced to
the classic \emph{Maximum Weighted Matching Problem} (MWMP) over a bipartite graph $G_B$.
Indeed, our reduction gives the line graph of $G_B$. Therefore, $\mathcal{F}_1$ is the family of line graphs of
bipartite graphs. It is known from Graph Theory that, if $G$ belongs to such family, then the \emph{claw}, the \emph{diamond} and the
\emph{odd holes} are forbidden induced subgraphs of $G$.
This leads to the following:

\medskip

\noindent {\bf Open question.} Which are the forbidden induced subgraphs that characterize those graphs from $\mathcal{F}_{\K}$ for $\K \geq 2$?

\medskip

Although none of the mentioned subgraphs are forbidden for the case $\K \geq 2$ (they can be generated from instances of
the 2-Matching Problem as it is shown in Figure \ref{fig:forb2}), the claw can be generalized as follows:
\begin{figure}
	\centering
		\includegraphics[scale=0.4]{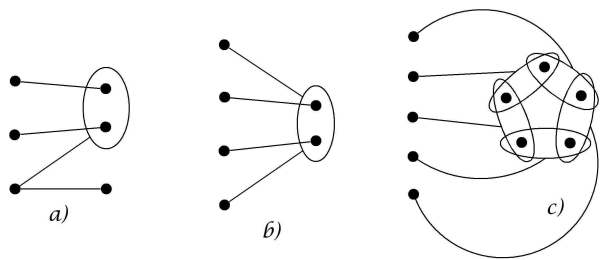}
	\caption{Instances for: a) claw, b) diamond, c) odd hole $C_5$}
	\label{fig:forb2}
\end{figure}
\begin{lemma} \label{lemita}
For $\K \geq 1$, let $G \in \mathcal{F}_{\K}$. Then, $G$ is $K_{1,\K+2}$-free.
\end{lemma}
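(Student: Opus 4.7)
The plan is to proceed by contradiction: suppose $G \in \mathcal{F}_{\K}$ contains an induced $K_{1,\K+2}$ with center $v_{aj}$ and leaves $v_{a_1 j_1}, \ldots, v_{a_{\K+2} j_{\K+2}}$, and then produce $\K+1$ distinct elements in the set $j$, which would contradict the constraint $|j| \leq \K$.

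First I would classify each leaf by how it is adjacent to the center, using the two types of edges in the definition of $G$. Each leaf $v_{a_i j_i}$ satisfies either (A) $a_i = a$ (with $j_i \neq j$ since the vertices are distinct), or (B) $a_i \neq a$ and $j_i \cap j \neq \emptyset$. The key observation is that the leaves of a star are pairwise non-adjacent, so I can rule out certain configurations: any two type-(A) leaves would share the index $a$ and thus be adjacent, so at most one leaf is of type (A). Consequently at least $\K+1$ of the leaves are of type (B); without loss of generality assume these are $v_{a_1 j_1}, \ldots, v_{a_{\K+1} j_{\K+1}}$.

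Next, for each type-(B) leaf I would choose an element $b_i \in j_i \cap j$, which exists by the defining adjacency. Two properties follow from the non-adjacency of the leaves. If $a_i = a_k$ for two type-(B) leaves, they would be adjacent by the first edge type, so the indices $a_1, \ldots, a_{\K+1}$ are pairwise distinct. If $b_i = b_k$ for $i \neq k$, then $b_i \in j_i \cap j_k$, which combined with $a_i \neq a_k$ forces an edge by the second edge type, again a contradiction. Hence $b_1, \ldots, b_{\K+1}$ are pairwise distinct.

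This yields $\K+1$ distinct elements $b_1, \ldots, b_{\K+1}$ all lying inside $j$, contradicting $|j| \leq \K$. The proof should be purely combinatorial and short; the only subtle point, and the one I would be most careful about, is the bookkeeping that reduces the two edge cases to a single conclusion about $|j|$, in particular ensuring that the ``at most one type-(A) leaf'' trick correctly leaves $\K+1$ type-(B) leaves to generate the required contradiction.
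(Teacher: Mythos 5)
Your proof is correct and follows essentially the same route as the paper's: both arguments show that at most one leaf can share the index $a$ with the center, leaving $\K+1$ leaves whose sets $j_i$ each meet $j$ but are pairwise disjoint (you make this explicit by picking distinct representatives $b_i \in j_i \cap j$), forcing $|j| \geq \K+1$, a contradiction.
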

\begin{proof}
Suppose that the star $K_{1,\K+2}$ is an induced subgraph of $G$. Let $v_{aj}$ be the central vertex of the star and
$v_{a_1,j_1}$, $v_{a_2,j_2}$, $\ldots$, $v_{a_{\K+2},j_{\K+2}}$ the remaining vertices. W.l.o.g., we can assume
that $a \neq a_1$, $a \neq a_2$, $\ldots$, $a \neq a_r$, $a = a_{r+1} = a_{r+2} = \ldots = a_{\K+2}$ for some
$r$. If $r \leq \K$, we would obtain that $a = a_{\K+1} = a_{\K+2}$ and then $v_{a,j_{\K+1}}$ and $v_{a,j_{\K+2}}$
would be adjacent which is absurd. Therefore, $r \geq \K+1$. Since $v_{aj}$ and $v_{a_i,j_i}$ are adjacent and
$a \neq a_i$ for all $1 \leq i \leq \K+1$, then $j \cap j_i \neq \emptyset$. On the other hand, $v_{a_i,j_i}$ and $v_{a_{i'},j_{i'}}$ are
not adjacent for all $1 \leq i < i' \leq \K+1$, then $j_i \cap j_{i'} = \emptyset$. Therefore, $j$ should have at least $\K+1$ elements
which leads to a contradiction.
\end{proof}
Another forbidden subgraph of the 2-Matching Problem is given as follows.
Let $G$ be the graph of Figure \ref{fig:forb3}(a).
Note that the instance of the 2-Matching Problem given in Figure \ref{fig:forb3}(b) corresponds to
the subgraph of $G$ induced by vertices $v_1,\ldots,v_7$. A drawback emerges when $v_8$ is considered. Hence, $G \notin \mathcal{F}_2$.
\begin{figure}
	\centering
		\includegraphics[scale=0.4]{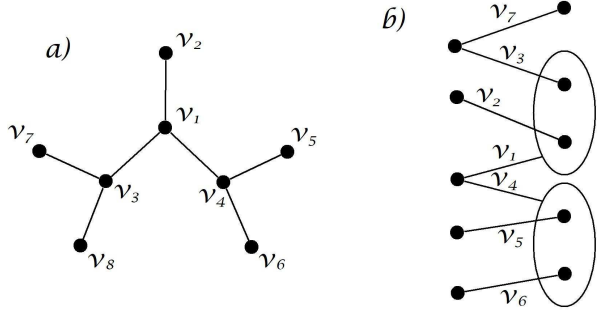}
	\caption{A graph not in $\mathcal{F}_2$: a) $G$, b) partial construction}
	\label{fig:forb3}
\end{figure}
\indent From the complexity point of view, the $\K$-Matching Problem for $\K = 1$ is polynomial due to the existence of efficient
algorithms for the MWMP such as the Hungarian Algorithm. When $\K = 2$, Lemma \ref{lemita} says that graphs from $\mathcal{F}_{\K}$
are $K_{1,4}$-free, and the MWSSP for $K_{1,4}$-free graphs is known to be $\NP$-Hard. Nevertheless, this does not mean that our
matching problem is hard since $\mathcal{F}_2$ has other forbidden subgraphs. Its complexity is addressed in the next section.

%
%
\section{Complexity of the problem}
\label{sec:complexity}

In this section, I prove that the $\K$-Matching Problem is $\NP$-hard for $\K \geq 2$. Even more, I consider a more restricted
problem where every star of $A$ has exactly multiplicity $\K$. The decision problem is as follows:

\medskip

{ \small
\noindent \underline{$\K$-Matching Decision Problem} ($\K$-\textbf{MDP})\\
\noindent \emph{INSTANCE}:~
   $n_A, n_B \in \mathbb{Z}_+$;~
   $A, B$ such that $|A| = n_A$, $|B| = n_B$;~
   $P_a \subset \PPP(B)$ such that $|j| = \K$ for all $j \in P_a$, for all $a \in A$;~
   $w_{aj} \in \mathbb{R}_+$ for all $j \in P_a$ such that $\sum_{j \in P_a} e^{-w_{aj}} = 1$, for all $a \in A$;~
   $t \in \mathbb{R}$.\\
\noindent \emph{QUESTION}: Is there a valid assignment $f$ such that $w(f) \leq t$ ?
}

\medskip

Let us first introduce two auxiliary problems.
Given $n \in \mathbb{Z}_+$, let
$\mathcal{P}$ and $\mathcal{Q}$ be disjoint sets such that $|\mathcal{P}|=|\mathcal{Q}|=n$. A \emph{perfect matching} (\pm for short) is
a set $M \subset \mathcal{P} \times \mathcal{Q}$ such that $|M|=n$ and every element of $\mathcal{P} \cup \mathcal{Q}$ occurs in exactly one
pair of $M$. The first, which is $\NP$-complete \cite{journal:frieze}, is defined below:

\medskip

{ \small
\noindent \underline{Disjoint Matchings} (\textbf{DM})\\
\noindent \emph{INSTANCE}:~
   $n \in \mathbb{Z}_+$;~
   disjoint sets $\mathcal{P}$, $\mathcal{Q}$ such that $|\mathcal{P}|=|\mathcal{Q}|=n$;~
   $\mathcal{A}_1,\mathcal{A}_2 \subset \mathcal{P} \times \mathcal{Q}$.\\
\noindent \emph{QUESTION}: Are there \pm $M_1 \subset \mathcal{A}_1, M_2 \subset \mathcal{A}_2$ such that
$M_1 \cap M_2 = \emptyset$ ?
}

\medskip

The second auxiliary problem is given below. It differs from the 2-Matching Decision Problem in that values $w_{aj}$ do not come from probabilities:

\medskip

{ \small
\noindent \underline{2-Matching Decision Problem with Arbitrary Weights} (2-\textbf{MDPAW})\\
\noindent \emph{INSTANCE}:~
   $n_A, n_B \in \mathbb{Z}_+$;~
   sets $A, B$ such that $|A| = n_A$ and $|B| = n_B$;~
   $P_a \subset \PPP(B)$ such that $|j| = 2$ for all $j \in P_a$, $a \in A$;~
   $w_{aj} \in \mathbb{R}_+$ for all $j \in P_a$, $a \in A$;~
   $t \in \mathbb{R}$.\\
\noindent \emph{QUESTION}: Is there a valid assignment $f$ such that $w(f) \leq t$ ?
}

\begin{lemma} \label{lemacomplejidad}
2-MDPAW is $\NP$-complete.
\end{lemma}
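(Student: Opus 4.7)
The plan is to show that 2-MDPAW is in $\NP$---a valid assignment $f$ is a polynomial-size certificate, and $f(a)\in P_a$, pairwise disjointness of the images, and $w(f)\le t$ are all verifiable in polynomial time---and then to reduce \textbf{DM} polynomially to it.

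For the reduction, given an instance of \textbf{DM} with $\mathcal{P}=\{p_1,\ldots,p_n\}$, $\mathcal{Q}=\{q_1,\ldots,q_n\}$, and $\mathcal{A}_1,\mathcal{A}_2$, I would construct a 2-MDPAW instance by taking $A=\mathcal{P}$ (so $n_A=n$) together with a ``doubled copy'' of $\mathcal{Q}$, namely $B=\{b^1_k,b^2_k:1\le k\le n\}$ (so $n_B=2n$). The key idea is that each $p_j\in A$ should simultaneously encode the choice of partner of $p_j$ in both matchings: a pair $\{b^1_k,b^2_{k'}\}\in P_{p_j}$ is intended to mean ``$p_j$ is matched to $q_k$ in $M_1$ and to $q_{k'}$ in $M_2$''. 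Accordingly, I would set
\[
P_{p_j}=\bigl\{\{b^1_k,b^2_{k'}\}:(p_j,q_k)\in\mathcal{A}_1,\ (p_j,q_{k'})\in\mathcal{A}_2,\ k\neq k'\bigr\},
\]
give every admissible pair weight $0$, and take $t=0$ (equivalently, any uniform positive weight $c$ with $t=cn$ would work). Each set has size exactly $2$ because $b^1_k\neq b^2_{k'}$, and the whole instance has size polynomial in $n+|\mathcal{A}_1|+|\mathcal{A}_2|$.

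Correctness would be a bijection between valid assignments and pairs of edge-disjoint perfect matchings. Given disjoint \pm $M_1\subset\mathcal{A}_1$, $M_2\subset\mathcal{A}_2$, for each $j$ let $q_{k_j},q_{k'_j}$ be the partners of $p_j$ in $M_1,M_2$ respectively (so $k_j\neq k'_j$ because $M_1\cap M_2=\emptyset$) and set $f(p_j)=\{b^1_{k_j},b^2_{k'_j}\}$; the injectivity of each $M_i$ on $\mathcal{P}$ ensures the images of $f$ are pairwise disjoint, and trivially $w(f)=0\le t$. Conversely, any valid $f$ selects for each $p_j$ a pair $\{b^1_{k_j},b^2_{k'_j}\}$ with $k_j\neq k'_j$, and the disjointness of images forces both $j\mapsto k_j$ and $j\mapsto k'_j$ to be bijections of $\{1,\ldots,n\}$; reading off $M_i=\{(p_j,q_{k^{(i)}_j}):1\le j\le n\}$ (with $k^{(1)}_j=k_j$, $k^{(2)}_j=k'_j$) yields two disjoint \pm inside $\mathcal{A}_1,\mathcal{A}_2$.

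The main obstacle---though a mild one once the right gadget is found---is engineering the construction so that all three conditions of \textbf{DM} (each $M_i$ being a perfect matching, $M_i\subset\mathcal{A}_i$, and $M_1\cap M_2=\emptyset$) are simultaneously captured by the single notion of ``valid assignment''. Two design choices do the work: duplicating $\mathcal{Q}$ into $b^1$- and $b^2$-copies so that each $q_k$ can be used once in each $M_i$ without violating 2-MDPAW's disjoint-image requirement, and imposing $k\neq k'$ inside the definition of $P_{p_j}$, which is exactly what transfers edge-disjointness of the two matchings to the assignment level.
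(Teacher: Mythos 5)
Your proof is correct, but it reduces DM to 2-MDPAW via a genuinely different gadget than the paper's. The paper takes $A$ to be the \emph{edge} set $\mathcal{A}_1 \cup \mathcal{A}_2$: each $a_{rs}$ chooses between ``put $(p_r,q_s)$ into $M_1$'' (the pair $\{p^1_r,q^1_s\}$), ``put it into $M_2$'' (the pair $\{p^2_r,q^2_s\}$), or ``discard it'' (a private dummy pair $\{z_{rs},z'_{rs}\}$), and it must then use carefully chosen weights together with the threshold $t=|\mathcal{A}_1|+|\mathcal{A}_2|-2n$ to force at least $2n$ non-dummy choices, which is what makes $M_1$ and $M_2$ \emph{perfect}; edge-disjointness comes for free because $f$ is a function on $a_{rs}$. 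You instead take $A$ to be the \emph{vertex} set $\mathcal{P}$, with each $p_j$ simultaneously naming its $M_1$-partner and its $M_2$-partner via the doubled copy $\{b^1_k,b^2_{k'}\}$ of $\mathcal{Q}$; totality of $f$ directly forces both matchings to be perfect, the disjoint-image condition forces them to be matchings, and the filter $k\neq k'$ forces $M_1\cap M_2=\emptyset$. The payoff of your route is that the weight structure becomes trivial (all zeros, $t=0$), so the argument is purely about feasibility and the counting step in the paper's converse direction disappears; the cost is that $|P_{p_j}|$ can be quadratic ($\deg_{\mathcal{A}_1}(p_j)\cdot\deg_{\mathcal{A}_2}(p_j)$) rather than linear in the degrees, which is still comfortably polynomial. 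Both arguments are complete; yours is arguably cleaner for this lemma, though the paper's edge-based encoding with dummy ``discard'' candidates is closer in spirit to the weight-shifting trick reused in Theorem \ref{kmatchinghard}.
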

\begin{proof}
First of all, it clearly is $\NP$. Below, a polynomial transformation from DM is proposed.
Consider an instance $\mathcal{P} = \{p_1, \ldots, p_n\}$, $\mathcal{Q} = \{q_1, \ldots, q_n\}$,
$\mathcal{A}_1,\mathcal{A}_2 \subset \mathcal{P} \times \mathcal{Q}$ of DM. We construct an instance of 2-MDPAW as follows.
Let $A = \{a_{rs} : ~\textrm{$r$ and $s$ such that}~ (p_r,q_s) \in \mathcal{A}_1 \cup \mathcal{A}_2\}$ and
\begin{multline*}
B = \{ p^1_i, p^2_i : ~\textrm{$i$ such that}~ p_i \in \mathcal{P} \}~ \cup \\
    \{ q^1_i, q^2_i : ~\textrm{$i$ such that}~ q_i \in \mathcal{Q} \}~ \cup \\
    \{ z_{rs}, z'_{rs} : ~\textrm{$r$ and $s$}$ $\textrm{such that}~ a_{rs} \in A \}.
\end{multline*}
Hence, $n_A = |\mathcal{A}_1 \cup \mathcal{A}_2|$ and $n_B = 4n + 2|\mathcal{A}_1 \cup \mathcal{A}_2|$.
For every $a_{rs} \in A$, let $P_{a_{rs}} = \{\{p^i_r, q^i_s\} : ~\textrm{$r$, $s$ and $i$}$
$\textrm{such that}~(p_r,q_s) \in \mathcal{A}_i\} \cup \{ \{z_{rs}, z'_{rs}\} \}$.
For $a_{rs} \in A$ and $j \in P_{a_{rs}}$, let
\begin{equation*}
w_{a_{rs} j} = \begin{cases}
0, & j = \{p^i_r, q^i_s\} ~\textrm{for some $i$} ~\land~ (p_r,q_s) \in \mathcal{A}_1 \triangle \mathcal{A}_2, \\
1, & j = \{p^i_r, q^i_s\} ~\textrm{for some $i$} ~\land~ (p_r,q_s) \in \mathcal{A}_1 \cap \mathcal{A}_2, \\
1, & j = \{z_{rs}, z'_{rs}\} ~\land~ (p_r,q_s) \in \mathcal{A}_1 \triangle \mathcal{A}_2, \\
2, & j = \{z_{rs}, z'_{rs}\} ~\land~ (p_r,q_s) \in \mathcal{A}_1 \cap \mathcal{A}_2, \end{cases}
\end{equation*}
where $\triangle$ denotes the symmetric difference operator between sets. Finally, let $t = |\mathcal{A}_1| + |\mathcal{A}_2| - 2n$.

We prove that, given disjoint \pm $M_1 \subset \mathcal{A}_1, M_2 \subset \mathcal{A}_2$, there exists a valid assignment
$f$ such that $w(f) \leq t$. Consider $f(a_{rs}) = \{p^i_r, q^i_s\}$ when $(p_r, q_s) \in M_i$ for some $i \in \{1,2\}$, and
$f(a_{rs}) = \{z_{rs}, z'_{rs}\}$ otherwise. The validity of $f$ is straightforward. Also,
$w(f) = $ 
   $|(M_1 \cap \mathcal{A}_2) \cup (M_2 \cap \mathcal{A}_1)|$ $+$
	 $|(\mathcal{A}_1 \backslash (M_1 \cup \mathcal{A}_2)) \cup (\mathcal{A}_2 \backslash (M_2 \cup \mathcal{A}_1))|$ $+$
   $2|(\mathcal{A}_1 \cap \mathcal{A}_2) \backslash (M_1 \cup M_2)| =$
	 $|\mathcal{A}_1 \backslash M_1| + |\mathcal{A}_2 \backslash M_2| = t$.
Conversely, we prove that, for a given valid assignment $f$ such that $w(f) \leq t$, there exist disjoint \pm
$M_1 \subset \mathcal{A}_1, M_2 \subset \mathcal{A}_2$. Consider
$M_i = \{ (p_r, q_s) : ~\textrm{$r$ and}$ $\textrm{$s$ such that}~ f(a_{rs}) = \{p^i_r, q^i_s\}\}$ for all $i \in \{1,2\}$.
Since $f$ is a function, $M_1 \cap M_2 = \emptyset$. It is also straightforward that $M_i \subset \mathcal{A}_i$.
Now, suppose that there exists an element in $\mathcal{P} \cup \mathcal{Q}$ occurring in two pairs of $M_i$.
W.l.o.g., suppose $(p_1,q_1), (p_1,q_2) \in M_1$. Then, $f(a_{11}) \cap f(a_{12}) = \{p^1_1,q^1_1\} \cap \{p^1_1,q^1_2\} \neq \emptyset$
which is absurd. Therefore, every element in $\mathcal{P} \cup \mathcal{Q}$ occur at most once in any pair of $M_1$ and once in $M_2$.
It is easy to see that $|M_1| \leq n$ and $|M_2| \leq n$.
Suppose that there exists an element in $\mathcal{P} \cup \mathcal{Q}$ which does not occur in any pair of $M_i$.
Again, w.l.o.g., suppose that such element does not occur in $M_1$. Then, $|M_1| < n$ and
$w(f) = |\mathcal{A}_1 \backslash M_1| + |\mathcal{A}_2 \backslash M_2| > |\mathcal{A}_1| + |\mathcal{A}_2| - 2n = t$. Absurd!
Therefore, $M_1$ and $M_2$ are both \pm and $|M_1| = |M_2| = n$.
\end{proof}

\begin{theorem} \label{kmatchinghard}
$\K$-MDP is $\NP$-complete for all $\K \geq 2$.
\end{theorem}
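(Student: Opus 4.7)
The plan is to prove the result in two stages: first establish that $\K$-MDP is $\NP$-complete for $\K=2$ by reducing from the 2-MDPAW instances produced in Lemma \ref{lemacomplejidad}, and then lift to all $\K\geq 3$ by a padding construction. Membership in $\NP$ is immediate, since any valid assignment serves as a polynomial-size certificate whose weight can be compared with $t$ in polynomial time.

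For $\K=2$, the only discrepancy between the two problems is the distribution constraint $\sum_{j\in P_a} e^{-w_{aj}}=1$, which is violated by the construction of Lemma \ref{lemacomplejidad} (whose weights live in $\{0,1,2\}$). I would remove this discrepancy by a uniform scale-and-shift. Fix a constant $\lambda\in(0,\ln(\sqrt{2}+1))$, say $\lambda=1/2$, and for each $a$ set $s_a(\lambda)=\sum_{j\in P_a} e^{-\lambda w_{aj}}$ and $w'_{aj}=\lambda w_{aj}+\ln s_a(\lambda)$. The two possible ``shapes'' of $P_{a_{rs}}$ in the construction---two options with weights $\{0,1\}$, or three options with weights $\{1,1,2\}$---give $s_a(\lambda)=1+e^{-1/2}$ and $s_a(\lambda)=2e^{-1/2}+e^{-1}$ respectively, both strictly greater than $1$, so $\ln s_a(\lambda)>0$ and hence $w'_{aj}>0$; by construction $\sum_{j\in P_a} e^{-w'_{aj}}=1$. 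Since $w'(f)=\lambda w(f)+\sum_{a\in A}\ln s_a(\lambda)$ is a strictly monotone affine transformation of $w(f)$, the new threshold $t'=\lambda t+\sum_{a\in A}\ln s_a(\lambda)$ makes the two decision problems equivalent.

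For the lift to $\K\geq 3$, I would reduce 2-MDP to $\K$-MDP by padding. Given a 2-MDP instance, introduce $\K-2$ new auxiliary elements $b_a^1,\ldots,b_a^{\K-2}$ per $a\in A$, disjoint from $B$ and pairwise disjoint across different $a$'s, and replace every $j\in P_a$ by $j\cup\{b_a^1,\ldots,b_a^{\K-2}\}$, now of cardinality exactly $\K$. Weights, threshold, and the normalization $\sum_j e^{-w_{aj}}=1$ are inherited verbatim, and validity of assignments is preserved because the auxiliary elements of $a$ never appear in $P_{a'}$ for $a\neq a'$. This yields a bijection between valid assignments with equal weights, so the two decision instances are equivalent.

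The hard part is the normalization in the $\K=2$ step: the cut-off $\lambda<\ln(\sqrt{2}+1)$ is forced by the triple-option candidate set with weights $\{1,1,2\}$, for which $2e^{-\lambda}+e^{-2\lambda}\geq 1$ exactly when $\lambda\leq \ln(\sqrt{2}+1)$. Any $\lambda$ strictly inside this interval---such as $1/2$---leaves all transformed weights strictly positive while respecting the distribution constraint, after which the rest of the verification is routine.
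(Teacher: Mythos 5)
Your proof is correct, and its overall architecture (reduce from the 2-MDPAW instances of Lemma~\ref{lemacomplejidad}, repair the normalization $\sum_{j\in P_a}e^{-w_{aj}}=1$, and pad candidate sets up to cardinality $\K$ with fresh private elements) matches the paper's reduction; but you resolve the central difficulty, the normalization, by a genuinely different mechanism. The paper adds a uniform constant $\beta>\ln p^*$ to every weight so that $\sum_{j\in P_a}e^{-w'_{aj}}<1$, introduces for each $a$ a slack candidate $j'_a$ on fresh elements $\bar b_{ak}$ carrying the leftover probability mass, and then adds a blocker star $\bar a_i$ whose unique candidate is that same set, which forces $f'(a)$ back into the original options; this normalizes an \emph{arbitrary} 2-MDPAW instance. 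You instead exploit the fact that the instances produced by Lemma~\ref{lemacomplejidad} have only the weight multisets $\{0,1\}$ and $\{1,1,2\}$, choose $\lambda$ small enough that $s_a(\lambda)=\sum_{j}e^{-\lambda w_{aj}}>1$ for both shapes (your cutoff $\lambda\le\ln(\sqrt2+1)$ for the triple is computed correctly, and $\lambda=1/2$ works), and apply the affine map $w_{aj}\mapsto\lambda w_{aj}+\ln s_a(\lambda)$, which achieves exact normalization, keeps all weights strictly positive, and transforms $w(f)$ monotonically so the threshold carries over. The trade-off is that your normalization is leaner---no extra stars, no extra candidates, no forcing argument---but is tailored to the specific weight patterns of Lemma~\ref{lemacomplejidad}, whereas the paper's $\beta$-shift with a slack candidate gives a reduction from all of 2-MDPAW; both suffice for $\NP$-hardness. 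Your per-star padding elements $b_a^k$ shared across all candidates of $a$, versus the paper's per-candidate elements $\tilde b_{jk}$, is an immaterial difference, and in both cases the bijection between valid assignments is weight-preserving for the same reason.
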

\begin{proof}
We propose a polynomial transformation from 2-MDPAW. Consider an instance
$A = \{a_1, \ldots, a_{n_A}\}$, $B = \{b_1, \ldots, b_{n_B}\}$, $P_a$, $w_{aj}$, and $t$ of 2-MDPAW.
We construct an instance $A', B', P'_a, w'_{aj}, t'$ of $\K$-MDP as follows.
Let $A' = A \cup \{ \bar{a}_1, \ldots, \bar{a}_{n_A} \}$ and
$B' = B \cup \{ \tilde{b}_{jk} : j \in \bigcup_{a \in A} P_a,~ 3 \leq k \leq \K \} \cup \{ \bar{b}_{ak} : a \in A,~ 1 \leq k \leq \K \}$.
For all $a \in A$, let $P'_a = \{ j \cup \bigcup_{k=3}^{\K} \tilde{b}_{jk} : j \in P_a \} \cup \{ j'_a \}$
where $j'_a = \bigcup_{k=1}^{\K} \bar{b}_{ak}$ (if $\K = 2$, we just have
$P'_a = P_a \cup \{ \{\bar{b}_{a1}, \bar{b}_{a2}\} \}$).
Take an $a^* \in A$ that maximizes $p^* \doteq \sum_{j \in P_{a^*}} e^{-w_{a^*j}}$. Let $\beta > ln(p^*)$
and $w'_{aj} = w_{aj} + \beta$ for all $j \in P_a$, $a \in A$. Then, $\sum_{j \in P_a} e^{-w'_{aj}} < 1$.
Let $w'_{aj'_a} = -ln(1-\sum_{j \in P_a} e^{-w'_{aj}})$ for all $a \in A$. We obtain $\sum_{j \in P'_a} e^{-w'_{aj}} = 1$.
For all $i \in \{1,\ldots,n_A\}$, let $P'_{\bar{a}_i} = \{ j'_{\bar{a}_i} \}$ and $w'_{\bar{a}_i j'_{\bar{a}_i}} = 0$ where
$j'_{\bar{a}_i} = \bigcup_{k=1}^{\K} \bar{b}_{{a_i}k}$. Finally, let $t' = t + n_A \beta$.

Now we prove that there is an $f$ of 2-MDPAW such that $w(f) \leq t$ if and only if there is an
$f'$ of $\K$-MDP such that $w(f') \leq t'$. In order $f'$ to be valid, $f'(\bar{a}_i) = j'_{\bar{a}_i}$
for all $1 \leq i \leq n_A$. We propose $f'(a) = f(a)$ for all $a \in A$. Clearly, if $f$ is valid then $f'$ is valid too, and conversely.
Since $\sum_{a \in A' \backslash A} w'_{a f'(a)} = 0$, $w(f') = w(f) + n_A \beta$.
\end{proof}


\newpage

\section*{Online Appendix of ``Cross-identification of stellar catalogs with multiple stars: Complexity and Resolution''}

\subsection*{Example of a 2-Matching Problem}

Consider an instance of the 2-Matching Problem where $A = \{a_1, a_2, a_3, a_4\}$ and
$B = \{b_1, b_2, \ldots, b_6\}$. Here, $a_1, a_4$ are single stars and $a_2, a_3$ are double.
Suppose that the first phase yields the following sets:\\
\indent \indent $P_{a_1} = \{\{b_1\}, \{b_2\}, \{b_3\}\}$,\\
\indent \indent $P_{a_2} = \{\{b_2, b_3\}, \{b_4, b_5\}\}$,\\
\indent \indent $P_{a_3} = \{\{b_2, b_3\}, \{b_5, b_6\}\}$,\\
\indent \indent $P_{a_4} = \{\{b_6\}, \emptyset\}$.\\
A scheme that includes probabilities is displayed in Figure \ref{fig:ex1}(a).
Here, the optimal assignment is
$f^*(a_1) = \{b_1\}$,
$f^*(a_2) = \{b_4, b_5\}$,
$f^*(a_3) = \{b_2, b_3\}$,
$f^*(a_4) = \emptyset$ with probability $\overline{p}(f^*) = 0.1008$.

The reduction to the MWSSP gives $M = 8.3269$, weights\\
\indent \indent $w_{1\{1\}} = 7.1229$,~$w_{1\{2\}} = 7.6338$,~$w_{1\{3\}} = 6.7175$,\\
\indent \indent $w_{2\{2,3\}} = 7.1229$,~$w_{2\{4,5\}} = 7.9702$,\\
\indent \indent $w_{3\{2,3\}} = 8.1038$,~$w_{3\{5,6\}} = 6.7175$,\\
\indent \indent $w_{4\{6\}} = 7.4106$ and $w_{4\emptyset} = 7.8161$\\
(letters ``$a$'' and ``$b$'' are omitted for the sake of readability),
and the graph is shown in Figure \ref{fig:ex1}(b).

\subsection*{Example of the reduction of Lemma \ref{lemacomplejidad} and Theorem \ref{kmatchinghard}}

Consider the instance of DM given in Figure \ref{fig:proof1}(a) where $n = 2$ and
$|\mathcal{A}_1| = |\mathcal{A}_2| = 3$. The corresponding instance of 2-MDPAW is shown in Figure \ref{fig:proof1}(b)
where $t = 2$.

Also, for $\beta = 0.4$ and the given instance of 2-MDPAW, the corresponding instance of 3-MDP is shown in Figure \ref{fig:proof2}
where $t' = 3.6$. Vertices $\tilde{b}_{j3}$ for all $j \in \bigcup_{a \in A} P_a$ are displayed as unlabeled circles filled with white color.

\subsection*{Algorithm}

Here, an exact algorithm for the the $\K$-Matching Problem is proposed and the resolution of a cross-identification
between two catalogs based on real data is presented.

The algorithm is given below.
\begin{enumerate}
\item For each $a \in A$ such that $\emptyset \in P_a$, do the following. If there is an element $j \in P_a$ such that
$w_{aj} > w_{a\emptyset}$, remove $j$ from $P_a$ (if $f^*$ is an optimal assignment then $f^*(a) \neq j$ since $\emptyset$ is a better
choice than $j$).
\item Generate the graph $G$ as stated in Section \ref{sec:model}.
\item Find the connected components of $G$.
\item For each component $G'$ of $G$, solve the problem restricted to $G'$.
\end{enumerate}
Let $A'$ and $B'$ be the stars involved in a component $G'$ of $G$, \ie $A' = \{ a \in A : v_{aj} \in V(G') \}$ and
$B' = \{ b \in B : v_{aj} \in V(G'), b \in j \}$.
In the last step of our algorithm, three cases can be presented:
\begin{itemize}
\item \emph{Unique star}. If $A'=\{a\}$, then the solution is straightforward: $f^*(a)= argmin_{j \in P_a} w_{aj}$.
\item \emph{Only single stars}. If $|A'| \geq 2$ and $k_a = 1$ for all $a \in A$, then the problem restricted to $G'$ can be solved via the
Hungarian Algorithm in polynomial time. In that case, the instance of the MWMP is: a bipartite graph $G_B$ such that
$V(G_B) = A' \cup B' \cup \{ \emptyset_a : a \in A' ~\textrm{such that}~ \emptyset \in P_a \}$ and
$E(G_B) = \{ (a,b) : a \in A', \{b\} \in P_a \} \cup \{ (a,\emptyset_a) : a \in A', \emptyset \in P_a \}$,
weights $-w_{a\{b\}}$ for each edge $(a,b)$ and weights $-w_{a\emptyset}$ for each edge $(a,\emptyset_a)$.
\item \emph{Multiple stars}. If $|A'| \geq 2$ and there is $a \in A'$ such that $k_a \geq 2$, then it can be solved with an exact
algorithm for the MWSSP\footnote{See, for instance, S. Rebennack, M. Oswald, D. O. Theis, H. Seitz, G. Reinelt and P. M. Pardalos,
\emph{A Branch and Cut solver for the maximum stable set problem}, J. Comb. Optim. \textbf{21} (2011), 434--457.}.
In the case that such algorithm is not available, solving the following integer linear programming formulation is a reasonably fast
alternative:
\begin{align}
\min ~~\sum_{a \in A'} \sum_{j \in P_a} w_{aj} x_{aj} & & \notag \\
\textrm{subject to} & & \notag \\
 & \sum_{j \in P_a} x_{aj} = 1, & \forall~ a \in A' \label{constr:assign}\\
 & \sum_{a \in A'} ~~\sum_{j \in P_a : b \in j} x_{aj} \leq 1, & \forall~ b \in B'  \label{constr:otherside} \\
 & x_{aj} \in \{0, 1\}, & \forall~ a \in A',~ j \in P_a \notag
\end{align}
Constraints (\ref{constr:assign}) guarantee that each star of $A'$ must be assigned to exactly one element $j$ of $P_a$.
Constraints (\ref{constr:otherside}) forbid that each star of $B'$ be assigned to two or more stars of $A'$.
For the sake of readability, the latter constraints are presented for all $b \in B'$ but one have to keep in mind that some of them can be
removed if: (i) the constraint has just one variable in the left hand side, or (ii) it is repeated, \ie if, for some $b \in B'$, there
exists another $\tilde{b} \in B'$ such that $b$ and $\tilde{b}$ occur exactly in the same tuples of $\bigcup_{a \in A'} P_a$.
\end{itemize}

An instance of the 2-Matching Problem is obtained once the first phase is completed.
Table \ref{table:optim} reports some highlights about the optimization of that instance.

As we can see from the table, $G$ is highly decomposable and just 111 integer linear problems needs to be solved.
Moreover, these integer problems turned out to be very easy to solve since the solver did not branch (all of them were solved in the
root node). The hardest one has 339 variables and 118 constraints, and took 0.0015 seconds of CPU time.
The optimization was performed on a computer equipped with an Intel i7-7700 at 3.60 Ghz and GuRoBi 6.5.2 as the MIP solver.
The overall process took 41.6 seconds of CPU time.

\subsection*{Description of the first phase}

This section is devoted to present a summary on how to obtain a set of candidate stars for a given star of the former catalog and
the probabilities involved in them.
Recall that such computations heavily depends on structure and data availability of both catalogs as well as the underlying physical model
used to establish the relationship between them.
It is beyond the scope of this work to analyze such scenarios neither to give a formal treatment, so a simplified\footnote{Stars of both catalogs should not be
near the celestial poles in order to avoid certain distortions, and stars with high variability in its brightness should be avoided.
This can be done by pre-identifying them and remove them from both catalogs.} but reasonable model is considered, which is enough for
presenting our approach\footnote{A more robust and general probabilistic model is discussed in T. Budav\'ari and A. S. Szalay,
\emph{Probabilistic Cross-Identification of Astronomical Sources}, Astrophys. J. \textbf{679} (2008), 301--309.}.

Consider catalogs $A$ and $B$, and let $A_2$ be the set of stars from catalog $A$ marked as ``double''.
Our goal is to propose an instance of the 2-Matching Problem.\\

Let us first present some basic elements of Positional Astronomy.
Usually, position is given in a well established reference frame where two spherical coordinates are used:
\emph{right ascension} denoted by $\alpha$ and \emph{declination} denoted by $\delta$, similar to longitude and latitude coordinates on Earth.
In fact, a pair $(\alpha, \delta)$ represents a point in the unit sphere. For a given two points $p_1, p_2$, denote its \emph{angular
distance} by $\theta(p_1,p_2)$.
A known property is that, if points $p_1, p_2$ have the same right ascension, $\theta(p_1,p_2)$ is given by the difference in
its declinations. However, if $p_1, p_2$ have the same declination, $\theta(p_1,p_2)$ depends on the difference in right ascensions and
the cosine of the declination of both points. For this reason, it is convenient to work with the quantity $\alpha^* = \alpha.cos(\delta)$
instead of $\alpha$ directly.

Catalogs usually give the right ascension $\alpha$, declination $\delta$ and \emph{visual magnitude} $m$ (a measure of brightness) of
each star. These parameters are modeled as a multivariate normal distribution.
However, in several catalogs, each parameter is considered independent from each other.
Therefore, for a given star we have
$\overline{\alpha^*} \sim \mathcal{N}(\alpha^*, \sigma_{\alpha^*}^2)$,
$\overline{\delta} \sim \mathcal{N}(\delta, \sigma_{\delta}^2)$,
$\overline{m} \sim \mathcal{N}(m, \sigma_{m}^2)$,
where $\alpha^*$, $\delta$ and $m$ are the expected values of the parameters and $\sigma_{\alpha^*}$, $\sigma_{\delta}$ and $\sigma_{m}$
its standard errors.

Positions provided in a catalog are valid for a certain \emph{epoch}, which is a specific moment in time. However, there exist
transformations for translating positions from one epoch to other such as precession and nutation. In addition, stars have its own
apparent motion across the sky denominated \emph{proper motion}. Some catalogs also provide additional coefficients for computing the
correction in proper motion. These coefficients have its own standard errors. Therefore, it is possible to compute the positions and its
uncertainties of a star for a new epoch by means of the mentioned transformations and the propagation of the error\footnote{Details of
these transformations are treated in J. Kovalevsky and P. K. Seidelmann, \emph{Fundamentals of Astrometry},
Cambridge University Press, UK, 2004.}.
This is the case of the catalog PPMX\footnote{See S. Roeser, E. Schilbach, H. Schwan, N. V. Kharchenko, A. E. Piskunov and R.-D. Scholz,
\emph{PPM-Extended (PPMX), a catalogue of positions and proper motions}, Astron. Astrophys. \textbf{488} (2008), 401--408.}
where position for epoch $J2000.0$, brightness, proper motions and its uncertainties are available, among others parameters.

Naturally, older catalogs handle less information. For instance, the Cordoba Durchmusterung (CD) does not report standard errors
for each star, but a mean standard error over several stars from the same region of the sky\footnote{See pages XXIX-XXX of J. M. Thome,
\emph{Cordoba Durchmusterung (-22$^\circ$ to -32$^\circ$)}, Resultados del Observatorio Nacional Argentino \textbf{16} (1892).},
\eg for stars whose declinations are between $-22^\circ$ and $-32^\circ$ we have $\sigma_{\alpha^*} = 9.3~\textrm{arcsec}$ and
$\sigma_{\delta} = 20.5~\textrm{arcsec}$.\\

Some extra parameters $(\rho_{max}, d_{sep}, \sigma_{d_{sep}}, \sigma_{mag}, p_{\emptyset}, p_{sgl})$ must be determined before
performing the cross-identification.
Therefore, the input of our problem consists of catalogs $A$, $B$ and these extra parameters.
They will be introduced thoughout this section.\\

\noindent \emph{Treatment of single stars}.
Let $a \in A \backslash A_2$ and $b \in B$. 
Observe that, if $a$ and $b$ are far from each other, it makes little sense that both represent the same star. Usually, a criterion based on
the angular distance between them can be used to keep those ``close'' pairs.
Consider a candidate for $a$ to every star $b \in B$ such that $\theta(a,b) < \rho_{max}$ where $\rho_{max}$ is a given threshold.
Hence, let us define
$$P_a = \{ \emptyset \} \cup \{ \{b\} : \theta(a,b) < \rho_{max},~ b \in B\}.$$
Note that the set $\emptyset$ is added to $P_a$ since it could happen that a star of catalog $A$ has no counterpart in $B$.

Let $a \in A \backslash A_2$ and $\{b\} \in P_a$, with its corresponding values
$\alpha^*_a$, $\delta_a$, $m_a$, $\sigma_{\alpha^*_a}$, $\sigma_{\delta_a}$, $\sigma_{m_a}$ and
$\alpha^*_b$, $\delta_b$, $m_b$, $\sigma_{\alpha^*_b}$, $\sigma_{\delta_b}$, $\sigma_{m_b}$ respectively.
A way to measure the probability that $a$ and $b$ are the same star is through the distribution of the 3-dimensional
random vector $(\overline{\alpha^*_a}-\overline{\alpha^*_b}, \overline{\delta_a}-\overline{\delta_b}, \overline{m_a}-\overline{m_b})$,
which is known that it behaves as a multivariate normal distribution whose probability density function is
\[ PDF_1(x,y,z;a,b) = pdf_{dif}(x; \alpha^*, a, b).pdf_{dif}(x; \delta, a, b).pdf_{dif}(x; m, a, b) \]
where
\begin{align*}
  pdf_{dif}(x; \tau, a, b) = pdf(x; \tau_a - \tau_b, \sigma_{\tau_a}^2 + \sigma_{\tau_b}^2), & &\tau \in \{\alpha^*, \delta, m\}
\end{align*}
and $pdf(x; \mu, \sigma) = \frac{1}{\sigma \sqrt{2 \pi}} e^{-\frac{(x-\mu)^2}{2 \sigma^2}}$ is the well known probability density
function of $\mathcal{N}(\mu, \sigma^2)$. Now, define the probability that $a$ corresponds to some $j \in P_a$ as follows:
$$p(a \rightarrow j) = \begin{cases}
p_{\emptyset}, & j = \emptyset, \\
(1 - p_{\emptyset}).\dfrac{PDF_1(0,0,0;a,b)}{\sum_{\{b'\} \in P_a} PDF_1(0,0,0;a,b')}, & j = \{b\},
\end{cases}$$
where $p_{\emptyset}$ is an estimate of the probability that a star from $A$ does not have counterpart in $B$ (usually very low).\\

This treatment generalizes the criterion based on the ``normalized distance''\footnote{See, for instance, W. Sutherland and W. Saunders,
\emph{On the likelihood ratio for source identification}, Mon. Not. R. Astron. Soc. \textbf{259} (1992), 413--420.}
for assigning stars from $A$ to $B$, that is to assign $a \in A$ and $b \in B$ in a way that
$$ND(a,b) \doteq \sqrt{\bigl((\alpha^*_a - \alpha^*_b)/\sigma_{\alpha^*}\bigr)^2 + \bigl((\delta_a - \delta_b)/\sigma_{\delta}\bigr)^2}$$
is minimized, where $\sigma_{\alpha^*}$ and $\sigma_{\delta}$ are the lengths of the axes of the error ellipse:
\begin{lemma}
If $\K = 1$, $|B| \geq |A|$, $p_{\emptyset} \in \mathbb{R}_+$ is almost zero, $\rho_{max} = 180^\circ$,
$\sigma_{\alpha^*} = \sigma_{\alpha^*_a}^2 + \sigma_{\alpha^*_b}^2$ and
$\sigma_{\delta} = \sigma_{\delta_a}^2 + \sigma_{\delta_b}^2$ for all $a \in A$ and $b \in B$, visual magnitudes are not considered
(\ie $m_a = m_b = 0$ and $\sigma_{m_a} = \sigma_{m_b} = 1$ for all $a \in A$ and $b \in B$) and $f^*$ is an optimal assignment
then $f^*$ is a minimum of $ND(f) \doteq \sum_{a \in A} ND(a,f(a))$.
\end{lemma}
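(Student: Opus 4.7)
The plan is to unwind $w_{a\{b\}}$ under the stated hypotheses and reduce $w(f)$, up to an $f$-independent constant, to a monotone function of the normalized-distance objective $ND(f)$; then the fact that $f^*$ minimizes $w$ will (modulo the gap described below) give the result.

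First I would evaluate $PDF_1(0,0,0;a,b)$. Under $m_a=m_b=0$ and $\sigma_{m_a}=\sigma_{m_b}=1$, the magnitude factor $pdf_{dif}(0;m,a,b)$ reduces to the pair-independent constant $\tfrac{1}{\sqrt{4\pi}}$. Reading the hypothesis $\sigma_{\alpha^*}=\sigma_{\alpha^*_a}^2+\sigma_{\alpha^*_b}^2$ (and analogously for $\delta$) as stating that the variance of the relevant positional difference is the same constant for every $(a,b)$, the prefactors of the two spatial factors are constant as well, and combining their Gaussian exponents yields
$$ PDF_1(0,0,0;a,b)\;=\;C\,\exp\Bigl(-\tfrac{1}{2}\,ND(a,b)^2\Bigr) $$
for a constant $C$ depending only on $\sigma_{\alpha^*}$ and $\sigma_\delta$.

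Next I would substitute into the normalized probability $p(a\to\{b\})=(1-p_\emptyset)\,PDF_1(0,0,0;a,b)/\sum_{\{b'\}\in P_a} PDF_1(0,0,0;a,b')$ and take $w_{a\{b\}}=-\ln p(a\to\{b\})$. Because $p_\emptyset$ is almost zero, $\rho_{max}=180^\circ$ (so every singleton $\{b\}$ with $b\in B$ lies in $P_a$), and $|B|\geq|A|$, any optimal $f^*$ assigns each $a$ to a singleton $\{b\}\subset B$ rather than to $\emptyset$, and the assignment is injective on $B$. Collecting terms,
$$ w(f)\;=\;\sum_{a\in A}\varphi(a)\;+\;\tfrac{1}{2}\sum_{a\in A} ND(a,f(a))^2, $$
where $\varphi(a)$ absorbs the $a$-only contributions coming from the normalization. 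The first sum is independent of $f$, so $f^*$ is the valid assignment that minimizes $\sum_{a\in A} ND(a,f(a))^2$ over the bipartite assignment polytope.

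The main obstacle is the final step from the minimized sum of squares to the minimized linear sum $ND(f)=\sum_a ND(a,f(a))$ as stated in the lemma. I would attack this either by invoking an additional structural property of the catalog cost matrix (such as a Monge-like inequality on $ND(\cdot,\cdot)$) that forces the two objectives to share a common minimizer on the bipartite assignment space, or by reinterpreting $ND(f)$ in light of the Gaussian derivation as the squared-sum form, in which case the reduction is immediate. Closing this gap rigorously — ideally identifying a mild astronomical hypothesis that certifies the $\sum ND^2$ optimum is also a $\sum ND$ optimum — is where the argument will require the most care.
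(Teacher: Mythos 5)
Your derivation coincides with the paper's proof essentially step for step: $P_a=\{\emptyset\}\cup\{\{b\}:b\in B\}$ because $\rho_{max}=180^\circ$, the magnitude factor of $PDF_1$ is a pair-independent constant, the positional prefactors are constant because the combined variances are assumed equal for every pair, the hypothesis that $p_{\emptyset}$ is almost zero rules out $f(a)=\emptyset$ at the optimum, and one arrives at $w(f)=C+\tfrac{1}{2}\sum_{a\in A}ND(a,f(a))^2$ with $C$ independent of $f$ (the paper's constant is really $-\sum_{a}\ln\bigl(\beta_a/(\sigma_{\alpha^*}\sigma_{\delta}(2\pi)^{3/2})\bigr)$ with $\beta_a$ depending on $a$ through the normalization sum, which is exactly the role of your $\varphi(a)$, and is harmless since it does not depend on $f$).

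The ``gap'' you flag at the end is real, but it is not yours to close: the paper's proof ends with the single sentence ``If $f^*$ is an assignment that minimizes the function $w$, it also minimizes $ND$,'' offering no argument for passing from the sum of squares to the linear sum. As you suspect, that passage is false in general: with $|A|=|B|=2$ and normalized distances $ND(a_1,b_1)=0$, $ND(a_1,b_2)=6$, $ND(a_2,b_1)=7$, $ND(a_2,b_2)=10$ (realizable by four points in the rescaled $(\alpha^*,\delta)$ plane, since $10\le 6+7$), the identity assignment minimizes $\sum_a ND(a,f(a))$ (cost $10$ versus $13$) while the swapped assignment minimizes $\sum_a ND(a,f(a))^2$ (cost $85$ versus $100$). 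So the lemma as literally stated does not follow from either your argument or the paper's; it holds only under the least-squares reading in which $ND(f)$ denotes the sum of squared normalized distances (the reading consistent with the Gaussian model), and in that case your reduction, like the paper's, finishes immediately and no Monge-type hypothesis is needed.
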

\begin{proof}
Note that, for each $a \in A$, $P_a = \{ \emptyset \} \cup \{ \{b\} : b \in B\}$ since $\theta(a,b) < \rho_{max}$ for all $a \in A$ and
$b \in B$. Let $\beta = (1 - p_{\emptyset})/\sum_{b' \in B} PDF_1(0,0,0;a,b')$. Then,
\[ p(a \rightarrow \{b\}) = \beta PDF_1(0,0,0;a,b) =
\beta \frac{1}{\sigma_{\alpha^*} \sqrt{2 \pi}} e^{-\frac{(\alpha^*_a - \alpha^*_b)^2}{2 \sigma_{\alpha^*}^2}}
\frac{1}{\sigma_{\delta} \sqrt{2 \pi}} e^{-\frac{(\delta_a - \delta_b)^2}{2 \sigma_{\delta}^2}} \frac{1}{\sqrt{2 \pi}}. \]
The hypothesis asserts that $p_{\emptyset}$ is small enough to satisfy $p(a \rightarrow \{b\}) > p_{\emptyset}$ for all $b \in B$. 
Let $f$ be a valid assignment. W.l.o.g., suppose that $f(a) \neq \emptyset$ for all $a \in A$. Then,
\begin{multline*}
  w(f) = \sum_{a \in A} w_{a f(a)} = - \sum_{a \in A} ln(p(a \rightarrow f(a)) =\\
    - \dfrac{\beta |A|}{\sigma_{\alpha^*} \sigma_{\delta} (\sqrt{2 \pi})^3}
- \sum_{a \in A} \biggl(-\frac{(\alpha^*_a - \alpha^*_{f(a)})^2}{2 \sigma_{\alpha^*}^2} -\frac{(\delta_a - \delta_{f(a)})^2}{2 \sigma_{\delta}^2} \biggr) =\\
    - \dfrac{\beta |A|}{\sigma_{\alpha^*} \sigma_{\delta} (\sqrt{2 \pi})^3}
+ \dfrac{1}{2} \sum_{a \in A} \biggl(\frac{(\alpha^*_a - \alpha^*_{f(a)})^2}{\sigma_{\alpha^*}^2} + \frac{(\delta_a - \delta_{f(a)})^2}{\sigma_{\delta}^2} \biggr)
\end{multline*}
If $f^*$ is an assignment that minimizes the function $w$, it also minimizes $ND$.
\end{proof}

\medskip

\noindent \emph{Treatment of double stars}.
Let $a \in A_2$ and $\{b_1, b_2\} \in P_a$ (as in the case of single stars, $P_a$ must be obtained with an astrometric criterion
such as the one presented in \cite{journal:mipaper}), with its corresponding values $\alpha^*_a$, $\delta_a$, $m_a$, $\sigma_{\alpha^*_a}$,
$\sigma_{\delta_a}$, $\sigma_{m_a}$, $\alpha^*_{b_1}$, $\delta_{b_1}$, $m_{b_1}$, $\sigma_{\alpha^*_{b_1}}$, $\sigma_{\delta_{b_1}}$,
$\sigma_{m_{b_1}}$, $\alpha^*_{b_2}$, $\delta_{b_2}$, $m_{b_2}$, $\sigma_{\alpha^*_{b_2}}$, $\sigma_{\delta_{b_2}}$, $\sigma_{m_{b_2}}$,
and such that $m_{b_1} < m_{b_2}$, \ie $b_1$ is brighter than $b_2$.
The way to compute the probability that $a$ corresponds to a candidate pair $\{b_1, b_2\}$ highly depends on what is meant by
``double star'' in catalog $A$. In our approach, two features are considered: the angular separation $\theta(b_1, b_2)$ and
the difference in magnitude $m_{b_1} - m_{b_2}$. Let us assume that both features are independent and normally distributed, the first one as
$\mathcal{N}(d_{sep}, \sigma_{d_{sep}})$ and the second one $\mathcal{N}(0,\sigma_{mag})$, where $d_{sep}$, $\sigma_{d_{sep}}$ and
$\sigma_{mag}$ are extra parameters.
Then, the probability of a pair $\{b_1, b_2\}$ is a ``candidate'' is given by a bidimensional random vector whose first component is
the difference between $\theta(b_1,b_2)$ and $d_{sep}$, and the second component is the difference in magnitude $m_{b_1} - m_{b_2}$. Now,
the probability that $a$ corresponds to $\{b_1, b_2\}$ is given by the probability that $a$ and $b_1$ are the same star and $\{b_1,b_2\}$ is
a candidate pair. The following formula defines the probability density function of a 5-dimensional random vector that comprises all together:
\begin{multline*}
PDF_2(x,y,z,w,t;a,b_1,b_2) = PDF_1(x,y,z;a,b_1). \\
pdf(w; \theta(b_1,b_2) - d_{sep}, \sigma_{\theta(b_1,b_2)}^2 + \sigma_{d_{sep}}^2).
pdf(t; m_{b_1} - m_{b_2}, \sigma_{m_{b_1}}^2 + \sigma_{m_{b_2}}^2 + \sigma_{mag}^2)
\end{multline*}
where $\theta(b_1,b_2)$ and $\sigma_{\theta(b_1,b_2)}$ can be computed from position and standard errors of $b_1$ and $b_2$.
Now, define the probability that $a$ corresponds to some $j \in P_a$ as follows:
\[
p(a \rightarrow j) = \begin{cases}
p_{\emptyset}, & j = \emptyset, \\
(1 - p_{\emptyset}).p_{sgl}.\dfrac{PDF_1(0,0,0;a,b)}{\sum_{\{b'\} \in P_a} PDF_1(0,0,0;a,b')},
                 & j = \{b\}, \\
(1 - p_{\emptyset}).(1 - p_{sgl}).\dfrac{PDF_2(0,0,0,0,0;a,b_1,b_2)}{\sum_{\{b'_1,b'_2\} \in P_a} PDF_2(0,0,0,0,0;a,b'_1,b'_2)},
                 & j = \{b_1,b_2\},
\end{cases} \]
where $p_{sgl}$ is an estimate of the probability that a double star from $A$ may be assigned to some single star in $B$.

\medskip

\noindent \emph{Preprocessing catalogs}. The resolution given in Section \ref{sec:model} consists of the cross-identification performed between two known stellar catalogs. The former one is a part of CD (catalog I/114 of VizieR astronomical database) consisting of 52692 stars
whose declinations are between $-22^\circ$ and $-25^\circ$ for epoch $B1875.0$.
The reason for taking these subset of stars is that the information about double stars, \ie the set $A_2$, is only available in printed
form and must be entered by hand. In our case, 571 stars were transcribed, corresponding to the first 177 pages of the printed catalog.

The other catalog is a part of PPMX (catalog I/312 of VizieR) with 130664 stars which cover the sky region of the former one.

The preprocessing of both catalogs is essentially the same as in \cite{journal:mipaper}. 
Some stars from CD have been deliberately removed due to the following causes: 1) variable star; 2) cumulus; 3) a star appearing in
PPM (catalogs I/193, I/206 and I/208 of VizieR) and whose position in PPM differs from CD in more than 2 arcmin for epoch $B1875.0$ or
whose magnitude differs from CD in more than $1.5$. Some other entries in catalog CD has been altered because of typo errors
\cite{journal:mipaper}. After this process, there are 52313 stars left (where 568 are doubles).

Data from PPMX catalog have been preprocessed as follows. Visual magnitudes have been converted to the magnitude scale used by CD:
$m_{CD} = -0.01335368 m^2+ 1.076636 m + 0.2249828$
where $m$ is the Johnson V magnitude reported in PPMX and $m_{CD}$ is the target
magnitude. These coefficients have been obtained through a quadratic fit explained in \cite{journal:mipaper}.
Positions have been translated to the epoch of CD. In addition, the column of visual magnitude (specifically, Johnson V) for several entries
of PPMX is empty so it has been filled with magnitudes from catalog APASS-DR9 (catalog II/336 of VizieR). After this process, stars with magnitude greater than $13.5$ have been discarded, leaving 83397 stars.

A preliminary cross-identification between CD and PPMX has been performed via the X-Match Service in order to generate the sets of candidate
stars $P_a$ faster. The parameters and standard errors have been set as follows:
\begin{itemize}
\item $\rho_{max} = 2$~arcmin~~(the maximum allowed by X-Match)
\item $d_{sep} = 34.9$~arcsec~~\cite{journal:mipaper}
\item $\sigma_{d_{sep}} = 13.65$~arcsec~~\cite{journal:mipaper}
\item $\sigma_{mag} = 0.915$~~\cite{journal:mipaper}
\item $p_{\emptyset} = 10^{-10}$
\item $p_{sgl} = 10^{-4}$
\item $\sigma_{\alpha^*_a}^2 + \sigma_{\alpha^*_b}^2 = (10.15$~arcsec$)^2$,
      $\sigma_{\delta_a}^2 + \sigma_{\delta_b}^2 = (22.74$~arcsec$)^2$,
      $\sigma_{m_a}^2 + \sigma_{m_b}^2 = 0.2759^2$ for all $a \in A$ and $b \in B$~~\cite{journal:mipaper}
\item $\sigma_{\theta(b_1,b_2)}^2 = 0$, $\sigma_{m_{b_1}}^2 + \sigma_{m_{b_2}}^2 = 0$ for all $b_1, b_2 \in B$
\end{itemize}
The dataset \cite{dataset:mendeley} contains the new CD catalog with the cross-identification (\texttt{new\_cd.txt}), its format
(\texttt{new\_format.txt}), the source code as well as other auxiliary files.
In Figure \ref{fig:diagram} a picture of the whole process is displayed.

\medskip

\noindent \emph{Acknowledgements.} I would like to thank Mar\'ia Julia Sever\'in and the people mentioned in the Acknowledgements of
\cite{journal:mipaper} who help me to enter the set of double stars, among other data.

\begin{figure}[b]
	\centering
		\includegraphics[scale=0.4]{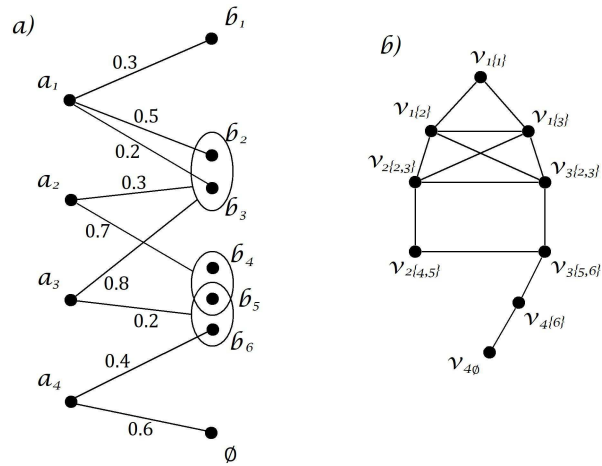}
	\vspace{-20pt}
	\caption{Example of 2-Matching Problem: a) instance, b) $G$}
	\label{fig:ex1}
\end{figure}

\begin{figure}[b]
	\centering
		\includegraphics[scale=0.6]{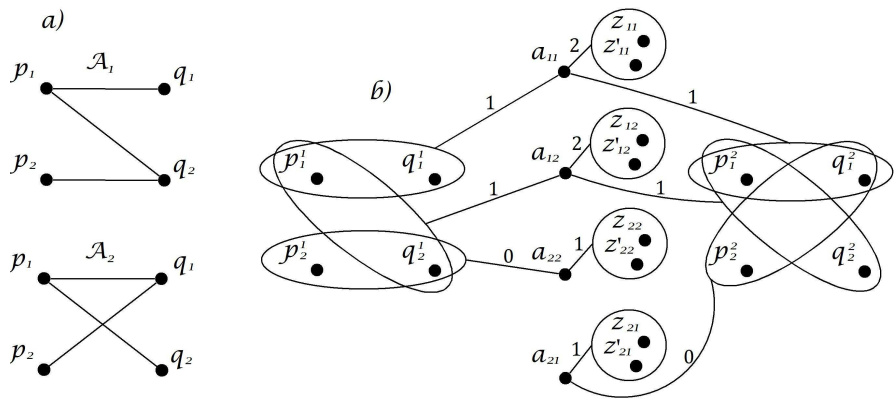}
	\caption{Example of reduction: a) DM, b) 2-MDPAW}
	\label{fig:proof1}
\end{figure}

\begin{figure}
	\centering
		\includegraphics[scale=0.6]{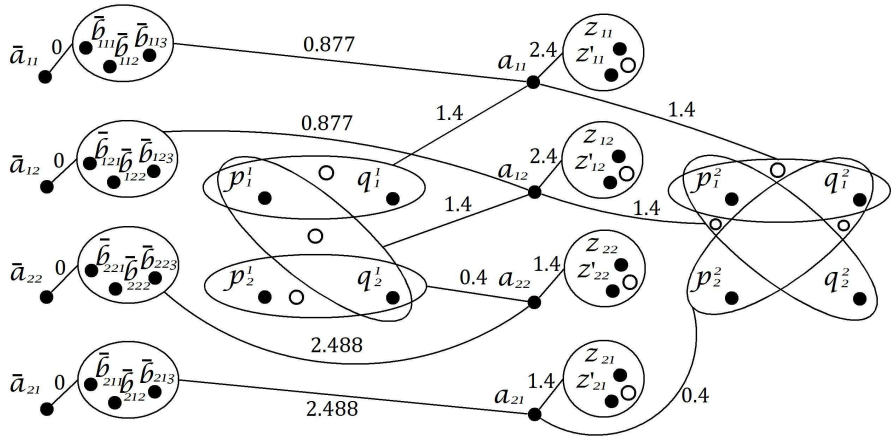}
	\caption{Example of reduction to 3-MDP}
	\label{fig:proof2}
\end{figure}

\begin{center}
\renewcommand{\arraystretch}{1.0}
\begin{table}
\begin{tabular}{|l|c|}
\hline
Number of stars of catalog $A$ ($n_A$) & 52313 \\
\hline
Number of stars of catalog $B$ ($n_B$) & 83397 \\
\hline
Double stars present in catalog $A$ ($|A_2|$) & 568 \\
\hline
Largest cardinal of $P_a$ & 34 \\
\hline
Number of components of $G$: & \\
$\bullet$ Unique star & 39383 \\
$\bullet$ Only single stars & 5628 \\
$\bullet$ Multiple stars & 111 \\
\hline
Largest cardinal of $A'$ found in components: & \\
$\bullet$ Only single stars & 34 \\
$\bullet$ Multiple stars & 7 \\
\hline
Statistics of the solution: & \\
$\bullet$ Unassigned stars & 245 \\
$\bullet$ Single stars assigned & 51502 \\
$\bullet$ Double stars assigned & 483 \\
$\bullet$ Double stars assigned to a single one in $B$ & 83 \\
\hline
\end{tabular}
\vspace{10pt}
\caption{Highlights about the optimization} \label{table:optim}
\end{table}
\end{center}

\begin{figure}
	\centering
		\includegraphics[scale=0.8]{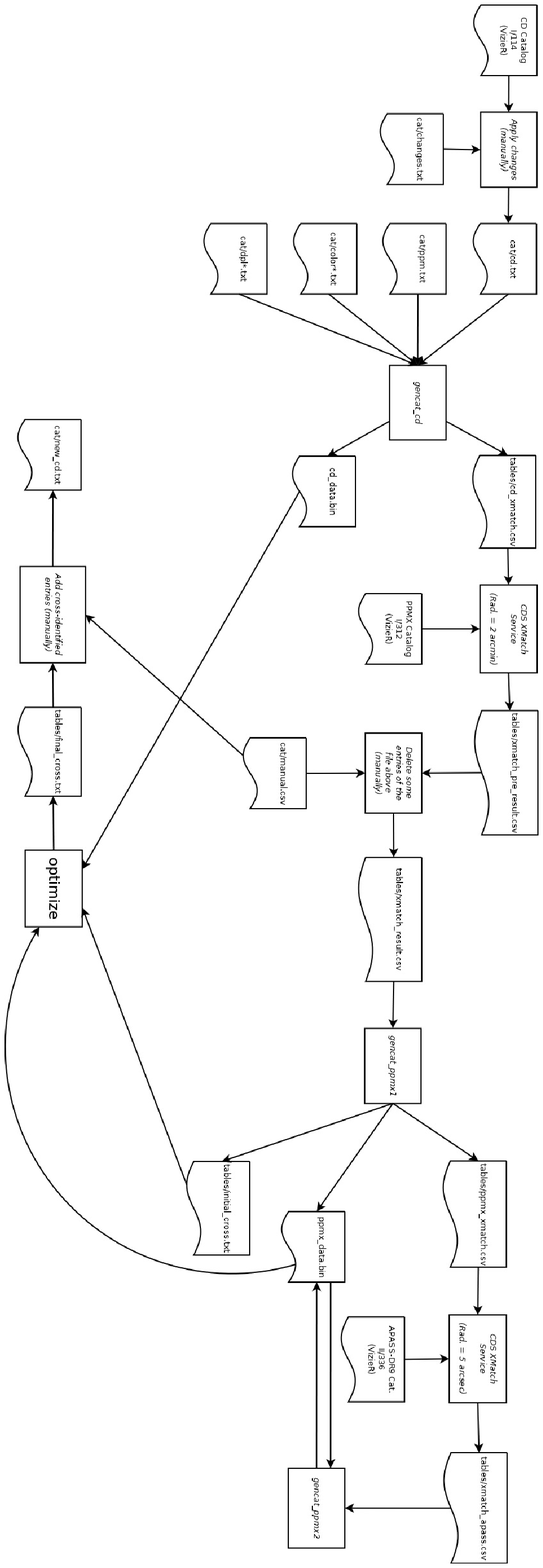}
	\caption{Diagram}
	\label{fig:diagram}
\end{figure}

\end{document}